\title{
\vspace*{-3cm}
\mbox{\hspace*{-5mm}{\tiny In {\em Principles of Verification: Cycling the Probabilistic Landscape. Essays Dedicated to Joost-Pieter Katoen on the Occasion of His 60th Birthday},}} \\[-3mm]
\mbox{\hspace*{-5mm}{\tiny N.\ Hansen et al.\ (eds.), Part I, LNCS 15260, pp.\ 73-84, Springer, 2024. Full version (extended with Appendix~B).}}\\
\vspace*{5mm}
Amortized Analysis of Leftist Heaps}
\author{Berry Schoenmakers\orcidID{0000-0001-6273-8930} \\ \email{berry@win.tue.nl}}
\institute{Math \& CS, TU Eindhoven, Netherlands}
\newcommand{\EMP}{{\sf empty}}
\newcommand{\ISE}[1]{\F{{\sf isempty}}{#1}}
\newcommand{\MNM}[1]{\F{{\sf min}}{#1}}
\newcommand{\INS}[2]{\F{\F{{\sf insert}}{#1}}{#2}}
\newcommand{\DLM}[1]{\F{{\sf delmin}}{#1}}
\newcommand{\SIN}[1]{\F{{\sf single}}{#1}}
\newcommand{\UNI}[2]{\F{\F{{\sf meld}}{#1}}{#2}}
\newcommand{\bal}[1]{\F{\rm bal}{#1}}
\newcommand{\f}[1]{\relax\ifmmode F_{#1}\else $F_{#1}$\fi}
\renewcommand{\l}[1]{\relax\ifmmode L\ifne{#1}{(}#1\ifne{#1}{)}\else $L\ifne{#1}{(}#1\ifne{#1}{)}$\fi}
\renewcommand{\r}[1]{\relax\ifmmode R\ifne{#1}{(}#1\ifne{#1}{)}\else $R\ifne{#1}{(}#1\ifne{#1}{)}$\fi}
\renewcommand{\b}[1]{\relax\ifmmode G_{#1}\else $G_{#1}$\fi}
\def\@div#1#2{\global\divide \csname c@#1\endcsname#2}
\def\mul#1#2{\global\multiply \csname c@#1\endcsname#2}
\newcounter{Fs}
\newcommand{\Fo}{\ifodd\value{Fs}\relax\else(\fi}
\newcommand{\Fc}{\ifodd\value{Fs}\relax\else)\fi}
\def\berreb{\relax}
\newcommand{\ife}[2]{\ifx\berreb#1\berreb #2 \else \relax \fi}
\newcommand{\ifne}[2]{\ifx\berreb#1\berreb \relax \else #2 \fi}
\newcommand{\F}[2]{\mbox{$
 \Fo
 \mul{Fs}{2}\addtocounter{Fs}{1}
 #1
 \ifne{#2}{\:}
 \@div{Fs}{2}\mul{Fs}{2}
 #2
 \@div{Fs}{2}
 \Fc
$}}
\newcommand{\Bin}[1]{\mbox{$\langle\ife{#1}{\;}#1\rangle$}}
\newcommand{\size}[1]{\mbox{$\# #1$}}
\newcommand{\rank}[1]{\mbox{$\dagger #1$}}
\newcommand{\prank}[1]{\mbox{$\ddagger #1$}}
\newcommand{\Log}[2]{\hj} 
\newcommand{\abs}[1]{\mbox{$\mid\!\!#1\!\!\mid$}}
\newcommand{\sz}[1]{\abs{#1}}
\newcommand{\Bar}{\mbox{\raisebox{0.2ex}{\framebox[0.3em]{\rule{0em}{0.80ex}}}}}
\newcommand{\Neb}[3]{\ifne{#1}{#2}#1\ifne{#1}{#3}}
\newcommand{\A}[2]{\mbox{${\cal A}\Neb{#1}{[}{]}\Neb{#2}{(}{)}$}}
\newcommand{\T}[2]{\mbox{${\cal T}\Neb{#1}{[}{]}\Neb{#2}{(}{)}$}}
\newcommand{\Pot}[1]{\Phi{#1}}
\newcommand{\pot}[2]{\varphi(#1,#2)}
\newcommand{\ch}[2]{\mbox{$#1\raisebox{-.3mm}{$\,\Join\,$}#2$}}   
\newcommand{\node}{\makebox(0,0){$_\bullet$}}
\newsavebox{\BB}
\newsavebox{\BC}
\newsavebox{\BD}
\newsavebox{\BE}
\newsavebox{\BF}
\newsavebox{\BG}
\newsavebox{\BH}
\newsavebox{\BI}
\newsavebox{\BJ}
\newsavebox{\BK}
\newsavebox{\BL}
\newsavebox{\BM}
\newsavebox{\BN}
\newsavebox{\BCT}
\newsavebox{\BET}
\newsavebox{\BIT}
\newsavebox{\BNT}
\newsavebox{\ARC}
\savebox{\ARC}{\begin{picture}(4,0)
 \thicklines
 \put(0.5,0){\vector(1,0){3}}
\end{picture}}
\newcommand{\arc}[1]{\begin{picture}(4,4)
 \put(2,2.5){\makebox(0,0)[b]{#1}}
 \put(0,2){\usebox{\ARC}}
\end{picture}}
\savebox{\BB}{}
\savebox{\BC}{\begin{picture}(0.5,1)(-0.5,-1)
 \put(0,0){\node}
 \put(-0.5,-1){\node}
 \put(0,0){\line(-1,-2){0.5}}
\end{picture}}
\savebox{\BCT}{\begin{picture}(0.5,1)(-0.5,-1)
 \put(0,0){\node}
 \put(0.5,-1){\node}
 \put(0,0){\line(1,-2){0.5}}
\end{picture}}
\savebox{\BD}{\begin{picture}(1,1)(-0.5,-1)
 \put(0,0){\node}
 \put(-0.5,-1){\node}
 \put(0.5,-1){\node}
 \put(0,0){\line(-1,-2){0.5}}
 \put(0,0){\line(1,-2){0.5}}
\end{picture}}
\savebox{\BE}{\begin{picture}(3.34,2)(-1.67,-2)
 \put(0,0){\node}
 \put(-0.67,-1){\makebox(0,0)[tr]{\usebox{\BC}}}
 \put(0.67,-1){\node}
 \put(0,0){\line(-2,-3){0.67}}
 \put(0,0){\line(2,-3){0.67}}
\end{picture}}
\savebox{\BET}{\begin{picture}(3.34,2)(-1.67,-2)
 \put(0,0){\node}
 \put(-0.67,-1){\node}
 \put(0.67,-1){\makebox(0,0)[tr]{\usebox{\BCT}}}
 \put(0,0){\line(-2,-3){0.67}}
 \put(0,0){\line(2,-3){0.67}}
\end{picture}}
\savebox{\BF}{\begin{picture}(3.34,2)(-1.67,-2)
 \put(0,0){\node}
 \put(-0.67,-1){\makebox(0,0)[t]{\usebox{\BD}}}
 \put(0.67,-1){\node}
 \put(0,0){\line(-2,-3){0.67}}
 \put(0,0){\line(2,-3){0.67}}
\end{picture}}
\savebox{\BG}{\begin{picture}(3.34,2)(-1.67,-2)
 \put(0,0){\node}
 \put(-0.67,-1){\makebox(0,0)[t]{\usebox{\BD}}}
 \put(0.67,-1){\makebox(0,0)[tr]{\usebox{\BC}}}
 \put(0,0){\line(-2,-3){0.67}}
 \put(0,0){\line(2,-3){0.67}}
\end{picture}}
\savebox{\BH}{\begin{picture}(4.34,3)(-2.17,-3)
 \put(0,0){\node}
 \put(-1,-1){\makebox(0,0)[t]{\usebox{\BE}}}
 \put(1,-1){\makebox(0,0)[tr]{\usebox{\BC}}}
 \put(0,0){\line(-1,-1){1}}
 \put(0,0){\line(1,-1){1}}
\end{picture}}
\savebox{\BI}{\begin{picture}(4.34,3)(-2.17,-3)
 \put(0,0){\node}
 \put(-1,-1){\makebox(0,0)[t]{\usebox{\BE}}}
 \put(1,-1){\makebox(0,0)[t]{\usebox{\BD}}}
 \put(0,0){\line(-1,-1){1}}
 \put(0,0){\line(1,-1){1}}
\end{picture}}
\savebox{\BIT}{\begin{picture}(4.34,3)(-2.17,-3)
 \put(0,0){\node}
 \put(-1,-1){\makebox(0,0)[t]{\usebox{\BD}}}
 \put(1,-1){\makebox(0,0)[t]{\usebox{\BET}}}
 \put(0,0){\line(-1,-1){1}}
 \put(0,0){\line(1,-1){1}}
\end{picture}}
\savebox{\BJ}{\begin{picture}(4.34,3)(-2.17,-3)
 \put(0,0){\node}
 \put(-1,-1){\makebox(0,0)[t]{\usebox{\BF}}}
 \put(1,-1){\makebox(0,0)[t]{\usebox{\BD}}}
 \put(0,0){\line(-1,-1){1}}
 \put(0,0){\line(1,-1){1}}
\end{picture}}
\savebox{\BK}{\begin{picture}(4.34,3)(-2.17,-3)
 \put(0,0){\node}
 \put(-1,-1){\makebox(0,0)[t]{\usebox{\BG}}}
 \put(1,-1){\makebox(0,0)[t]{\usebox{\BD}}}
 \put(0,0){\line(-1,-1){1}}
 \put(0,0){\line(1,-1){1}}
\end{picture}}
\savebox{\BL}{\begin{picture}(4.34,3)(-2.17,-3)
 \put(0,0){\node}
 \put(-1,-1){\makebox(0,0)[t]{\usebox{\BG}}}
 \put(1,-1){\makebox(0,0)[t]{\usebox{\BE}}}
 \put(0,0){\line(-1,-1){1}}
 \put(0,0){\line(1,-1){1}}
\end{picture}}
\savebox{\BM}{\begin{picture}(6.17,4)(-3.67,-4)
 \put(0,0){\node}
 \put(-1.5,-1){\makebox(0,0)[t]{\usebox{\BH}}}
 \put(1.5,-1){\makebox(0,0)[t]{\usebox{\BE}}}
 \put(0,0){\line(-3,-2){1.5}}
 \put(0,0){\line(3,-2){1.5}}
\end{picture}}
\savebox{\BN}{\begin{picture}(6.17,4)(-3.67,-4)
 \put(0,0){\node}
 \put(-1.5,-1){\makebox(0,0)[t]{\usebox{\BI}}}
 \put(1.5,-1){\makebox(0,0)[t]{\usebox{\BE}}}
 \put(0,0){\line(-3,-2){1.5}}
 \put(0,0){\line(3,-2){1.5}}
\end{picture}}
\savebox{\BNT}{\begin{picture}(6.17,4)(-3.67,-4)
 \put(0,0){\node}
 \put(-1.5,-1){\makebox(0,0)[t]{\usebox{\BE}}}
 \put(1.5,-1){\makebox(0,0)[t]{\usebox{\BIT}}}
 \put(0,0){\line(-3,-2){1.5}}
 \put(0,0){\line(3,-2){1.5}}
\end{picture}}
\newcommand{\ba}{$_{\Bin{}}$}
\newcommand{\bb}{$\node$}
\newcommand{\bc}{\usebox{\BC}}
\newcommand{\bd}{\usebox{\BD}}
\newcommand{\be}{\usebox{\BE}}
\newcommand{\bF}{\usebox{\BF}}
\newcommand{\bg}{\usebox{\BG}}
\newcommand{\bh}{\usebox{\BH}}
\newcommand{\bi}{\usebox{\BI}}
\newcommand{\bj}{\usebox{\BJ}}
\newcommand{\bk}{\usebox{\BK}}
\newcommand{\bl}{\usebox{\BL}}
\newcommand{\bm}{\usebox{\BM}}
\newcommand{\bn}{\usebox{\BN}}
\newcommand{\bnt}{\usebox{\BNT}}
\begin{document}

\maketitle

\begin{abstract}
Leftist heaps and skew heaps are two well-known data structures for
mergeable priority queues. Leftist heaps are constructed for efficiency in
the worst-case sense whereas skew heaps are self-adjusting, designed for
efficiency in the amortized sense. In this paper, we analyze the amortized
complexity of leftist heaps to initiate a full performance comparison with
skew heaps. We consider both the leftist heaps originally developed by Crane
and Knuth, which are also referred to as rank-biased (or, height-biased) leftist heaps,
and the weight-biased leftist heaps introduced by Cho and Sahni.
We show how weight-biased leftist heaps satisfy the same exact
amortized bounds as skew heaps. With these matching bounds we establish a nice
trade-off in which storage of weights is used to limit
the worst-case complexity of leftist heaps, without affecting the amortized
complexity compared to skew heaps. For rank-biased leftist heaps,
we obtain the same amortized lower bounds as for skew heaps, but
whether these bounds are tight is left as an open problem.
\end{abstract}

\section{Celebration}
It's a true pleasure to dedicate this paper to the celebration of Joost-Pieter Katoen's sixtieth birthday.
In 1988 we got to know each other as fresh computer science graduates, sharing office ``Club 737'' at TU Eindhoven
with two more students. I have very fond memories of those days. We were hardworking but also had a lot of fun.
One thing that springs to mind is how Joost (as I got to know him) would throw his pencil
eraser full-force against the blinds, distressing whoever was sitting behind {\em our} Olivetti
desktop PC facing the window. And this was not the only way he kept us sharp. Joost did his
master's at the University of Twente and together with his no-nonsense attitude this was
a welcome change for the rest of us, all from Eindhoven. Our Club 737 participation to the
Nuenen relay triathlon with Pieter Struik swimming, Joost cycling, and me running is
also memorable, not the least for the announcement of our disqualification even before I was finishing
in first position (ever) because Joost had been sent the wrong way, making an early U-turn.

He quickly picked up the ``calculational style'' of programming which was pretty much
baked into every computer science course at TU Eindhoven those days, since E.W.\ Dijkstra started
this in the seventies. We did this not only for imperative programming, using the Guarded Command Language,
but also for procedural, functional, concurrent, parallel, and even logical programming.
One particular topic we happened to work on together was systolic arrays, which later resulted
in two joint papers~\cite{KS91parallel,KS96}. Joost made sure that we got those two papers published,
not just leave it as an internal project.

I think he pretty much continued that way, working on one project after the other, building
this impressive and massive research portfolio we know him for today. Thinking about a topic for
the present paper, I recalled seeing their recent paper on amortized {\em expected} complexity~\cite{BKK+23},
which combines probabilistic reasoning and amortized analysis in a formal setting.
This fits nicely with the work by Tobias Nipkow et al.\ (see, e.g.,~\cite{NB19}) on formal verification
of previously published amortized analyses, in the same style as initiated in my PhD thesis~\cite{Sch92}.
Another interesting development in this direction is the work by Lorenz Leutgeb et al.,
continuing a series of papers by the late Martin Hofmann (see~\cite{LMZ21} and references therein).
In fact, the latter paper presents a slightly improved analysis of splay trees compared to the bounds
in~\cite{ST85}: by changing the potential function into $\Phi'{(x)}=\Pot{(x)}-\log_2\sz{x}$
the amortized costs for the delete operation are reduced from $6\log_2\sz{x}$ to $5\log_2\sz{x}$ comparisons,
while retaining the bounds for the other operations. The new bound was discovered by their automated
analysis of splaying, which builds on elements of type checking systems, starting from the analysis in~\cite[Section 11.2]{Sch92}
and~\cite{Sch93}. In light of the linear constraint systems used in~\cite{LMZ21} we also like
to mention the use of linear programming to find (optimal) potential functions by Sleator~\cite{Sle92}.

With all these developments in mind and with the realization that Joost-Pieter is still going strong
at contributing to these topics in modern computer science, I'll be happy to conclude this paper
with a challenging open problem, conjecturing that the lower bound for rank-biased leftist heaps
presented below is in fact tight.

\section{Leftist Heaps}
Leftist heaps were invented by Crane~\cite{Cra72} and further developed by Knuth, see~\cite[Section~5.2.3]{Knu98}.
Apart from supporting the usual priority queue operations such as $\INS{}{}$ and $\DLM{}$,
leftist heaps also support $\UNI{}{}$ for merging (or, melding) heaps.
Leftist heaps achieve logarithmic time for all these operations. Nowadays, there exist lots
of alternative priority queue implementations, often with better performance, such
as merging in constant time (see, e.g., \cite{Bro13}), but leftist heaps, and skew heaps as their
self-adjusting counterparts, are very simple to implement and generally applicable.

Leftist heaps are slightly more complicated than (top-down) skew heaps. Skew heaps are
self-adjusting and require no storage other than the storage required
for the heap itself. Leftist heaps either store the weight or the rank for each subtree.
Weight-biased leftist heaps were introduced by Cho and Sahni~\cite{CS96} as a single
(top-down) pass variant of the original rank-biased leftist heaps due to Crane and Knuth.
As highlighted by Okasaki~\cite{Oka99}, the weight-biased variant is also favorable in
combination with lazy evaluation; similar advantages apply in combination with concurrent
evaluation, as first considered by Jones~\cite{Jon89}. As skew heaps keep no additional
information at all, the same advantages apply there as well.

\begin{figure}[t]
\begin{center}
\fbox{\begin{tabular}{l}
$\begin{array}{lcl}
 \EMP &=& \Bin{} \\
 \ISE{x} &=& x=\Bin{} \\
 \SIN{a} &=& \Bin{a} \\
 \UNI{x}{y} &=& \ch{x}{y} \\
 \MNM{\Bin{}} &=& \infty \\
 \MNM{\Bin{t,a,u}} &=& a \\
 \DLM{\Bin{t,a,u}} &=& \ch{t}{u}
 \end{array}$
 \vspace*{1mm} \\
 $\begin{array}{lcll}
 \ch{\Bin{}}{\Bin{}} &=& \Bin{}  \\
 \ch{\Bin{t,a,u}}{y} &=& \bal{\Bin{t,a,\ch{u}{y}}}, &a\leq\MNM{y} \\
 \ch{x}{\Bin{t,a,u}} &=& \bal{\Bin{t,a,\ch{x}{u}}}, &\mbox{otherwise}
\end{array}$
\end{tabular}}

\vspace*{3mm}
\fbox{$\begin{array}{c@{\hspace{3mm}}lcll}
 \mbox{Skew heaps} & \bal{\Bin{t,a,u}} &=& \Bin{u,a,t} \\[1mm]
 \mbox{Weight-biased} &
 \bal{\Bin{t,a,u}}  &=& \Bin{t,a,u}, &\size{t}>\size{u}\\
 \mbox{leftist heaps} &&\Bar& \Bin{u,a,t}, &\size{t}\leq\size{u}\\[1mm]
 \mbox{Rank-biased} &
 \bal{\Bin{t,a,u}}  &=& \Bin{t,a,u}, &\rank{t}>\rank{u} \\
 \mbox{leftist heaps} &&\Bar& \Bin{u,a,t}, &\rank{t}\leq\rank{u}
\end{array}$}
\end{center}
\caption{Purely functional binary heaps with three balancing strategies.}
\label{tdsh}
\end{figure}
A simple implementation of the three types of heaps is shown in Figure~\ref{tdsh},
omitting the redundant operation $\INS{a}{x}=\UNI{\SIN{a}}{x}$.
We use \Bin{} to denote an empty tree and \Bin{t,a,u} to denote a binary tree with
left subtree $t$, root value $a$, and right subtree $u$, with \Bin{a} as a shorthand
for \Bin{\Bin{},a,\Bin{}}. The weight \size{x}
of tree $x$ is defined by $\size{\Bin{}}=0$ and $\size{\Bin{t,a,u}}=\size{t}+\size{u}+1$.
The rank \rank{x} is defined by $\rank{\Bin{}}=0$ and $\rank{\Bin{t,a,u}}=\rank{u}+1$,
hence \rank{x} is equal to the length of the rightmost path from the root of $x$.
Binary heaps satisfy the heap property implying that the root holds the minimum value
for each subtree. In addition,  for weight-biased heaps, the weight of each left subtree
is at least the weight of its right sibling, and for rank-biased heaps,
the rank of each left subtree is at least the rank of its right sibling.
Note that weights require about $\log_2 n$ bits of storage for size-$n$ subtrees,
while $\log_2 \log_2 n$ bits suffice for ranks, due to (\ref{eq:rank}).

As a suitable cost measure $\T{}{}$ for the time complexity of binary heaps we count the number of
comparisons $a\leq\MNM{y}$ used for unfolding $\ch{}{}$, including all comparisons with $\infty$.
Clearly, the costs for $\EMP{}$, $\ISE{x}$, $\SIN{a}$, $\MNM{x}$ are then 0,
while $\T{\UNI{}{}}{x,y} = \rank{x}+\rank{y}$  and $\T{\DLM{}}{\Bin{t,a,u}} = \rank{t} + \rank{u}$.
For leftist heaps we have the bound
\begin{equation}\label{eq:rank}
\rank{x} \leq \log_2 \sz{x},
\end{equation}
using $\sz{x}=\size{x}+1$ to denote the weight of $x$ plus one~\cite{Cra72,Knu98,CS96}.
This leads to the following worst-case bounds for leftist heaps:
\[\begin{array}{lll}
\T{\UNI{}{}}{x,y} &\leq& 2\,\log_2 \sz{\ch{x}{y}}, \\
\T{\DLM{}}{x}     &\leq& 2\,\log_2 \sz{x}.
\end{array}\]

For skew heaps, \rank{x} is linear in $\sz{x}$ in the worst case, so we need
to perform an amortized analysis to get any useful bounds.
However, as we will show below, the above worst-case bounds are also overly pessimistic for leftists heaps.
We will present several improved amortized analyses for leftist heaps.

Note that by $\T{f}{E}$ we denote the cost of evaluating $f$ given the {\em value} of expression $E$,
not counting the cost of the evaluation of $E$~\cite{KS91}:
\[ \T{f}{E}=\T{}{f(E)}-\T{}{E}.\]
This allows for modular analysis using composition rules such as
\[ \T{g \circ f}{E} = \T{f}{E} + \T{g}{f(E)}.\]
Defining the amortized cost of $f$ with respect to a potential function $\Pot{}$ as
\[ \A{f}{E} = \T{f}{E} + \Pot{(f(E))} - \Pot{(E)} \]
then allows for modular amortized analysis, using composition rules such as
 \[ \A{g \circ f}{E} = \A{f}{E} + \A{g}{f(E)}. \]
See \cite[Ch.~5]{Sch92} for this formalization of the physicist's method~\cite{ST86,Tar85}.

\section{Simple Amortized Analysis of Leftist Heaps}
\label{section:simple-bounds}
We perform an amortized analysis in terms of a potential function $\Pot{}$,
where the amortized costs are given by:
\[\begin{array}{lllll}
\A{\UNI{}{}}{x,y} &=& \T{\ch{}{}}{x,y} + \Pot{(\ch{x}{y})} -\Pot{(x)} -\Pot{(y)}, \\
\A{\DLM{}{}}{x} &=& \T{\ch{}{}}{t,u} + \Pot{(\ch{t}{u})}  - \Pot{(x)}, \mbox{\ \  with } x=\Bin{t,a,u}.
\end{array}\]
Taking $\Pot{(x)}=\rank{x}$ gives us the following bounds for leftist heaps, using (\ref{eq:rank}):
\[\begin{array}{lllll}
\A{\UNI{}{}}{x,y} &=& \rank{x}+\rank{y} + \rank{(\ch{x}{y})} - \rank{x} -\rank{y}  &\leq&  \log_2 \sz{\ch{x}{y}}, \\
\A{\DLM{}{}}{x} &=& \rank{t} + \rank{u} + \rank{(\ch{t}{u})} - \rank{u} - 1 &\leq&  2\,\log_2 \sz{x}.
\end{array}\]
This means that the bound for \UNI{}{} is reduced by 50\%, while retaining the bound for \DLM{}.
Note that (\ref{eq:rank}) actually implies the same bounds for any potential function $\Pot{}$
satisfying $\rank{x}\leq\Pot{(x)}\leq\log_2 \sz{x}$.

\section{Amortized Analysis of Weight-Biased Leftist Heaps}
\label{section:amor-weight}
We have reduced the bound for \UNI{}{} by 50\% but for many applications the overall cost is dominated by the cost of the \DLM{} operations.
Therefore we want to make the bound for \DLM{} as tight as possible. In this section we will show that for weight-biased leftist heaps
the upper bound for \A{\DLM{}}{x} can be improved to $\log_\phi \sz{x}\approx1.44\,\log_2 \sz{x}$, where $\phi=(\sqrt{5}{+}1)/2\approx1.618$ is the golden ratio.
The lower bound in Section~\ref{section:lower-bound} will imply that this bound is tight.

In our amortized analysis we exploit the close relationship between weight-biased leftist heaps and top-down skew heaps.
We define potential function $\Pot{}$ as
\[\begin{array}{lll}
\Pot{\Bin{}}      &=& 0, \\
\Pot{\Bin{t,a,u}} &=& \Pot{(t)} + \pot{t}{u} + \Pot{(u)},
\end{array}\]
where
\[ \pot{t}{u} = \max\left(\log_\beta \frac{\beta\sz{u}}{\sz{t}+\sz{u}}, 0\right), \]
with $\beta=\phi^{\phi+2}\approx 5.703$, following~\cite{KS91},~\cite[Lemma~9.2]{Sch92}.

The essential property that we will need for this potential function is that
\begin{equation}\label{eq:phi-prop}
\sz{t}\leq\sz{u} \Rightarrow \pot{u}{t}\leq\pot{t}{u},
\end{equation}
as $\log_\beta$ is increasing. Also note that $0\leq\pot{t}{u}\leq1$.

The central operation to analyze is \ch{x}{y}. Because of the balancing step we cannot reuse the analysis of~\cite{KS91}.
Instead we will directly prove by induction on $x$ and $y$ that
\begin{equation}\label{eq:indhyp}
\A{\ch{}{}}{x,y} \leq \log_\alpha \sz{\ch{x}{y}} + \log_\beta \sz{x} + \log_\beta \sz{y},
\end{equation}
with $\alpha=\phi^{2\phi-1}\approx 2.933$.
This is clearly true for $x=y=\Bin{}$. Otherwise, assume w.l.o.g.\ that $x=\Bin{t,a,u}$ and $a\leq\MNM{y}$.
Then $\ch{x}{y}=\bal{\Bin{t,a,\ch{u}{y}}}$ and
\[\begin{array}{lll}
 \A{\ch{}{}}{x,y}
 &=& \T{\ch{}{}}{x,y} + \Pot{(\ch{x}{y})} - \Pot{(x)} - \Pot{(y)} \\[1mm]
 &=& 1 + \T{\ch{}{}}{u,y} + \Pot{(\rm bal \Bin{t,a,\ch{u}{y}})} -\Pot{\Bin{t,a,u}} - \Pot{(y)} \\[1mm]
 &=& \A{\ch{}{}}{u,y} + 1 - \pot{t}{u} + \left\{\begin{array}{ll}
    \pot{t}{\ch{u}{y}} , & \mbox{if } \sz{t}>\sz{\ch{u}{y}} \\
    \pot{\ch{u}{y}}{t} , & \mbox{if } \sz{t}\leq\sz{\ch{u}{y}}
    \end{array}\right. \\[1mm]
 &\leq& \A{\ch{}{}}{u,y} + 1 - \pot{t}{u} + \pot{\ch{u}{y}}{t} \\[1mm]
 &\leq& \log_\alpha \sz{\ch{u}{y}} +\pot{\ch{u}{y}}{t} + \log_\beta \sz{u} + 1 - \pot{t}{u} + \log_\beta \sz{y} \\[1mm]
 &\leq& \log_\alpha \sz{\ch{u}{y}} +\pot{\ch{u}{y}}{t} + \log_\beta (\sz{t} +\sz{u}) + \log_\beta \sz{y} \\[1mm]
 &\leq& \log_\alpha (\sz{t} + \sz{\ch{u}{y}}) + \log_\beta \sz{x} + \log_\beta \sz{y} \\[1mm]
 &=& \log_\alpha \sz{\ch{x}{y}} + \log_\beta \sz{x} + \log_\beta \sz{y},
\end{array}\]
where the first inequality follows from (\ref{eq:phi-prop}),
the second inequality follows from the induction hypothesis,
the third inequality follows from the definition of $\varphi$,
and the last inequality follows as it was shown in~\cite{KS91} that
\[ \log_\beta \frac{\beta n}{m+n} \leq \log_\alpha \frac{m+n}{m}, \]
for all positive $m$ and $n$.

With both \sz{x} and \sz{y} bounded by \sz{\ch{x}{y}}, we conclude from (\ref{eq:indhyp}) that
\[ \A{\ch{}{}}{x,y} \leq \log_\phi \sz{\ch{x}{y}}, \]
as $\log_\alpha \phi + 2\log_\beta \phi = 1/(2\phi-1) + 2/(\phi+2) = 1$ (using $\phi^2=\phi+1$).
With this we have proved our first main result.
\begin{theorem}{}\label{ubs} There exists a potential function
such that the amortized costs for weight-biased leftist heaps satisfy (counting comparisons): \EMP, \ISE{x}, \SIN{a},
and \MNM{x} cost 0, \UNI{x}{y} costs at most $\log_\phi (\sz{x}+\sz{y})$ and \DLM{x} costs at most $\log_\phi \sz{x}$.
\end{theorem}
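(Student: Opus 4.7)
The plan is to package the inductive bound (\ref{eq:indhyp}), already derived for \ch{}{} on arbitrary inputs, into an amortized bound for each user-facing operation, with the potential function $\Pot{}$ kept fixed to the one defined in Section~\ref{section:amor-weight}. For \EMP, \ISE{x}, \SIN{a}, and \MNM{x}, none of these operations invokes \ch{}{}, so each has actual cost $0$, and none mutates an existing heap. Since $\Pot{(\Bin{a})}=0$ for a singleton, the potential change is $0$ in every case, and the amortized cost is $0$.

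For \UNI{x}{y}, I would use $\UNI{x}{y}=\ch{x}{y}$ and invoke directly the bound $\A{\ch{}{}}{x,y}\leq\log_\phi \sz{\ch{x}{y}}$ already derived from (\ref{eq:indhyp}) via the identity $\log_\alpha\phi+2\log_\beta\phi=1$. Since $\sz{\ch{x}{y}}=\sz{x}+\sz{y}-1\leq\sz{x}+\sz{y}$, this yields $\A{\UNI{}{}}{x,y}\leq\log_\phi(\sz{x}+\sz{y})$.

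For \DLM{x} with $x=\Bin{t,a,u}$, the plan is to expand $\A{\DLM{}}{x}=\T{\ch{}{}}{t,u}+\Pot{(\ch{t}{u})}-\Pot{(x)}$ and use the decomposition $\Pot{(\Bin{t,a,u})}=\Pot{(t)}+\pot{t}{u}+\Pot{(u)}$ to rewrite it as $\A{\ch{}{}}{t,u}-\pot{t}{u}$. Since $\pot{t}{u}\geq 0$ and $\sz{\ch{t}{u}}=\sz{x}-1\leq\sz{x}$, the bound $\A{\ch{}{}}{t,u}\leq\log_\phi \sz{\ch{t}{u}}$ then yields $\A{\DLM{}}{x}\leq\log_\phi \sz{x}$.

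The real obstacle has already been addressed in the preceding derivation of (\ref{eq:indhyp}) — pushing the induction through the conditional balancing step while calibrating $\alpha$, $\beta$, and $\varphi$ so that the three logarithm terms collapse to a single $\log_\phi$. What remains here is just bookkeeping: chaining $\Pot{}$ through the two heap-modifying operations and verifying the trivial cases.
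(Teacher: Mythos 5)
Your route is essentially the paper's own: the paper treats the inductive bound (\ref{eq:indhyp}) and its consequence $\A{\ch{}{}}{x,y}\leq\log_\phi\sz{\ch{x}{y}}$ as the entire content of the proof, and reads the theorem off via the Section~\ref{section:simple-bounds} expressions for $\A{\UNI{}{}}{x,y}$ and $\A{\DLM{}}{x}$. Your bookkeeping makes that implicit step explicit and is correct where it matters: for $\UNI{}{}$ the bound transfers directly since $\sz{\ch{x}{y}}=\sz{x}+\sz{y}-1\leq\sz{x}+\sz{y}$, and for $\DLM{}$ your rewriting $\A{\DLM{}}{x}=\A{\ch{}{}}{t,u}-\pot{t}{u}\leq\log_\phi\sz{\ch{t}{u}}\leq\log_\phi\sz{x}$, using $\pot{t}{u}\geq0$ and $\sz{\ch{t}{u}}=\sz{x}-1$, is exactly the step the paper leaves tacit.

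One concrete claim in your trivial-cases paragraph is false, however: $\Pot{(\Bin{a})}\neq0$. By the definition of $\Pot{}$ in Section~\ref{section:amor-weight}, $\Pot{(\Bin{a})}=\pot{\Bin{}}{\Bin{}}=\max\left(\log_\beta \frac{\beta}{2}, 0\right)=1-\log_\beta 2\approx 0.60$, since $\sz{\Bin{}}=1$ and $\beta\approx5.703>2$. So under this exact potential function, $\SIN{a}$ has positive amortized cost $1-\log_\beta2$, while $\EMP$, $\ISE{x}$, and $\MNM{x}$ are indeed amortized-free ($\Pot{(\Bin{})}=0$, and the latter two create no new structure). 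To be fair, the paper glosses over the same point --- it asserts the theorem directly after the meld bound, and its earlier ``cost 0'' remarks concern only actual comparisons --- so your proposal is no less complete than the paper's proof; but your assertion that ``the potential change is $0$ in every case'' would not survive scrutiny. The defect is an additive constant below $1$ and is easily patched, e.g., by stating the cost of $\SIN{a}$ as at most $1-\log_\beta2$, or by charging this surplus to the subsequent meld in $\INS{a}{x}=\UNI{\SIN{a}}{x}$; with $\Pot{}$ exactly as defined, though, the claim that $\SIN{a}$ costs $0$ needs this small caveat.
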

These bounds coincide with the bounds for skew heaps from~\cite{KS91}. Intuitively, one may say that for weight-biased heaps,
not swapping in case the left subtree is already larger does not hurt the performance compared to skew heaps, where we always
swap subtrees.

The above bounds may now be combined with the bounds from Section~\ref{section:simple-bounds} by taking their convex combination:
\[\begin{array}{lllll}
\A{\UNI{}{}}{x,y} &\leq& \lambda\,\log_2 \sz{\ch{x}{y}} + (1-\lambda)\log_\phi \sz{\ch{x}{y}} &\approx& (1.44-.44\lambda)\log_2 \sz{\ch{x}{y}}, \\
\A{\DLM{}}{x} &\leq& 2\lambda\,\log_2 \sz{x} +(1-\lambda)\log_\phi \sz{x} &\approx& (1.44+.56\lambda)\log_2 \sz{x},
\end{array}\]
with $0\leq\lambda\leq1$.

\section{Amortized Analysis of Rank-Biased Leftist Heaps}
\label{section:amor-rank}
For rank-biased heaps we do not have a similar improvement over the bounds from Section~\ref{section:simple-bounds}.
In a rank-biased heap there is no relation between the weights of left and right subtrees,
and we cannot use property (\ref{eq:phi-prop}) like we can for weight-biased heaps.
In fact, as we will prove below, the potential functions $\Pot{}$ from the previous sections do
not work for rank-biased leftists heaps. The search for a potential function $\Pot{}$ for which
$\A{\DLM{}}{x}\leq c\,\log_2 \sz{x}$ for some constant $c<2$ is therefore completely open.\footnote{
Of course, dismissing potential functions such as $\Pot{(x)}=\gamma\,\size{x}\,\log_2\sz{x}$, $\gamma>0$,
which artificially shift costs from \DLM{} to \UNI{}{}: $\A{\DLM{}}{x}\leq(2-\gamma)\log_2\sz{x}$
can be made arbitrarily small, but $\A{\UNI{}{}}{x,y}$ gets ineffectively large, e.g., linear in $\sz{x}$ and $\sz{y}$.}

Let $B_k^a$ denote a perfect binary tree of height $k$ in which all nodes are labeled with $a$,
that is, $B_0^a=\Bin{}$ and $B_k^a=\Bin{B_{k-1}^a,a,B_{k-1}^a}$ for $k\geq1$.
We introduce trees $W_k=\Bin{T_k,0,U_k}$ for $k\geq2$, where
\[\begin{array}{l@{\hspace{1cm}}l}
T_2 = \Bin{\Bin{0},0,\Bin{2}}, & U_2 = \Bin{B_2^1,1,\Bin{1}}, \\[0.6mm]
T_k = \Bin{B_k^0,0,T_{k-1}},   & U_k = \Bin{B_k^1,1,U_{k-1}}.
\end{array}\]
As these trees satisfy the heap property and $\rank{B_k}=\rank{T_k}=\rank{U_k}=k$,
it follows that $W_k$ is indeed a rank-biased leftist heap for all $k\geq2$.
Also, $\sz{B_k}=2^k$, $\sz{T_k}=2^{k+1}-4$, and $\sz{U_k}=2^{k+1} -2$ (hence $W_k$ is
also weight-biased).

We will now analyze the amortized cost of $\DLM{W_k}=V_k$, with $V_k=\ch{T_k}{U_k}$.
We first do so for potential function $\Pot{(x)}=\rank{x}$ from Section~\ref{section:simple-bounds}.
Without proof we state that $\rank{V_k}=k$ as well, so we get
\[\A{\DLM{}}{W_k} = \rank{T_k} + \rank{U_k} +\rank{V_k} - \rank{W_k} = 2k-1.\]
Since $\sz{W_k}=2^{k+2}-6$, we see that $\A{\DLM{}}{W_k} \approx 2 \log_2 \sz{W_k}$.
Thus, for trees $W_k$, we get the worst-case complexity for $\DLM{}$, taking $2k$ comparisons
(of which only the last one is a comparison with $\infty$), while at the same time the potential difference
$\Pot{(V_k)} - \Pot{(W_k)}=-1$ is close to zero.

Together with the fact that trees $W_k$ can actually
occur as rank-biased heaps (see Appendix~\ref{section:surjectivity}), we clearly cannot get
an improved bound for the amortized cost of $\DLM{}$ when using $\Pot{(x)}=\rank{x}$.
The same conclusion can be drawn for any potential function satisfying $\rank{x}\leq\Pot{(x)}\leq\log_2\sz{x}$.
This conclusion even extends to functions like $\Pot{(x)}=\prank{x}$. Here, the ``minor'' rank $\prank{x}$ is defined
by $\prank{\Bin{}}=0$ and $\prank{\Bin{t,a,u}}=\rank{t}+1$, for which we state the following
properties without proof.
\begin{lemma}{}\label{lem:rankmeld}
For rank-biased heaps $x$ and $y$:
\begin{enumerate}
\item [\rm (i)] $\rank{x}\leq\prank{x}\leq \log_2 \sz{x} + 1$,
\item[\rm (ii)] $\min(\rank{x},\rank{y}) \leq \rank{(\ch{x}{y})} \leq \rank{x}+\rank{y}$,
\item[\rm (iii)] $\rank{(\ch{x}{y})} \leq \max(\prank{x},\prank{y})$,
\item[\rm (iv)] $\min(\prank{x},\prank{y}) \leq \prank{(\ch{x}{y})} \leq \prank{x}+\prank{y}$.
\end{enumerate}
\end{lemma}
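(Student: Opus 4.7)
The plan is to prove items (i)--(iv) together by induction on $\sz{x}+\sz{y}$, exploiting the fact that $\ch{}{}$ preserves the rank-biased property (so recursive subcalls stay rank-biased and the induction hypothesis applies to them). The base cases, where $x$ or $y$ is empty, are immediate because $\ch{\Bin{}}{y}=y$ and $\ch{x}{\Bin{}}=x$.

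For (i), only structural induction on $x$ is needed: if $x=\Bin{t,a,u}$ then the rank-biased invariant gives $\rank{t}\geq\rank{u}$, so $\prank{x}=\rank{t}+1\geq\rank{u}+1=\rank{x}$, while the upper bound follows by applying (\ref{eq:rank}) to $t$ and noting $\sz{t}\leq\sz{x}$.

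For (ii)--(iv), in the inductive step I assume WLOG that $x=\Bin{t,a,u}$ with $a\leq\MNM{y}$, so $\ch{x}{y}=\bal{\Bin{t,a,\ch{u}{y}}}$. Since $\bal$ places the higher-rank child on the left, one obtains the joint identity
\[ \rank{(\ch{x}{y})} = 1+\min(\rank{t},\rank{(\ch{u}{y})}), \quad \prank{(\ch{x}{y})} = 1+\max(\rank{t},\rank{(\ch{u}{y})}). \]
Once this identity is in hand, each claim collapses to a short calculation. The upper bound of (ii) follows by bounding the $\min$ by $\rank{(\ch{u}{y})}$, applying the induction hypothesis, and using $\rank{x}=\rank{u}+1$. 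The lower bound of (ii) combines $\rank{t}\geq\rank{u}$ with the inductive lower bound on $\rank{(\ch{u}{y})}$, together with a short case split on whether $\rank{u}\leq\rank{y}$ or $\rank{u}>\rank{y}$. Claim (iii) is immediate: bound the $\min$ by $\rank{t}$ to get $\rank{(\ch{x}{y})}\leq\rank{t}+1=\prank{x}\leq\max(\prank{x},\prank{y})$. The lower bound of (iv) is likewise immediate from the $\max$ expression, since $\prank{(\ch{x}{y})}\geq\rank{t}+1=\prank{x}\geq\min(\prank{x},\prank{y})$; for the upper bound, split on whether $\bal$ keeps $t$ on the left (so $\prank{(\ch{x}{y})}=\prank{x}\leq\prank{x}+\prank{y}$) or swaps (so $\prank{(\ch{x}{y})}=\rank{(\ch{u}{y})}+1\leq\rank{u}+\rank{y}+1\leq\prank{x}+\prank{y}$, using (ii) for the first inequality and (i) for the second).

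The main obstacle is the lower bound of (ii): the induction hypothesis yields only $\rank{(\ch{u}{y})}\geq\min(\rank{u},\rank{y})$, and converting this into $\rank{(\ch{x}{y})}\geq\min(\rank{x},\rank{y})$ requires both the rank-biased invariant $\rank{t}\geq\rank{u}$ and the small case split just mentioned, because $\min$ does not commute cleanly with the $+1$ coming from the new root. All other inequalities are one-line arithmetic consequences of the joint $\min/\max$ identity above.
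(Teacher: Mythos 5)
The paper states Lemma~2 \emph{without proof} (``for which we state the following properties without proof''), so there is no in-paper argument to compare yours against; I can only assess your proposal on its own terms, and it is essentially sound. Your joint identity is the right key lever: for rank-biased balancing one has $\rank{(\bal{\Bin{t,a,v}})}=1+\min(\rank{t},\rank{v})$ and $\prank{(\bal{\Bin{t,a,v}})}=1+\max(\rank{t},\rank{v})$ (the tie case, where $\bal$ swaps, is consistent with both), and instantiating $v=\ch{u}{y}$ does reduce (ii)--(iv) to short calculations, with the only real work in the lower bound of (ii) exactly where you locate it. Your treatment of (i) via Eq.~(1) applied to the left subtree, and your closing chain for the upper bound of (iv), where $\rank{u}+\rank{y}+1=\rank{x}+\rank{y}\leq\prank{x}+\prank{y}$ follows from (i) applied to both $x$ and $y$, are both correct. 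The induction hypothesis is applied only to the pair $(u,y)$, both rank-biased, so the measure $\sz{x}+\sz{y}$ decreases as required.

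One detail is factually wrong, though harmlessly so: your base cases. Under this paper's definition of $\ch{}{}$ it is \emph{not} true that $\ch{\Bin{}}{y}=y$ or $\ch{x}{\Bin{}}=x$. Since $\MNM{\Bin{}}=\infty$, melding with the empty heap still recurses down the entire rightmost path and applies $\bal$ at every node, swapping siblings of equal rank on ties; this is precisely why Lemma~3 in Appendix~A (constructing $y$ with $\ch{\Bin{}}{y}=x$) is a non-trivial statement rather than an instance of $y=x$. The repair costs nothing: the only genuine base case is $x=y=\Bin{}$, where all four claims hold with every quantity equal to $0$, and your WLOG inductive step $x=\Bin{t,a,u}$ with $a\leq\MNM{y}$ already covers the cases in which exactly one argument is empty, because the comparison against $\infty$ makes one of the two recursive clauses of $\ch{}{}$ fire. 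With that correction, your induction goes through exactly as written.
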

Using $\Pot{(x)}=\prank{x}$, property (i) yields essentially the same bounds for \A{\UNI{}{}}{}
and \A{\DLM{}}{} as in Section~\ref{section:simple-bounds}.

Finally, we consider the potential function $\Pot{}$ from Section~\ref{section:amor-weight},
including all variants of $\pot{t}{u}$ proposed in~\cite{KS91,Sch92}. Typically, $\pot{t}{u}$ is
decreasing in $\sz{t}$ and increasing in $\sz{u}$, and possibly $0\leq\pot{t}{u}\leq1$.
As it turns out, the combination of these properties does not work out well for $\A{\DLM{}}{W_k}$,
as by construction the trees $V_k$ are {\em not} weight-biased.
We basically get the same results for the amortized cost of $\DLM{W_k}$ as
for the potential functions from Section~\ref{section:simple-bounds}.
Concretely, e.g., for $\pot{t}{u}= \log_\beta (\beta\sz{u}/(\sz{t}+\sz{u}))$, we have, deferring the proof
to Appendix~\ref{section:proof4}:
\begin{equation}\label{eq:potdiff}
 \Pot{(V_k)} - \Pot{(W_k)} = -1 + \log_\beta \frac{(2^{k+2} -6)(2^{k+1} -4)}{(2^{k+2} -7)(2^{k+1} -1)}.
\end{equation}
We have already seen that $\T{\DLM{}}{W_k}=2k$, so $\A{\DLM{}}{W_k} \approx 2k-1$,
and we reach the same conclusion as above.

In the next section, however, we will see that the same lower bound applies to rank-biased heaps as for weight-biased heaps.
This leaves a gap between the lower bound $\log_\phi \sz{x}$ and the upper bound  $2\,\log_2 \sz{x}$ for \A{\DLM{}}{x}.
Resolving this gap for rank-biased leftist heaps therefore remains as an open problem.

\section{Lower Bound for Leftist Heaps}
\label{section:lower-bound}
\begin{figure}[t]
\begin{center}
\begin{tabular}{ccccccccc}
\ba&\bb&\bc&\bd&\be&\bF&\bg&\bh&\bi \\
\b{0}&\b{1}&\b{2}&\b{3}&\b{4}&\b{5}&\b{6}&\b{7}&\b{8}\\[5mm]
\end{tabular}
\begin{tabular}{c@{\hspace{-2mm}}c@{\hspace{-2mm}}c@{\hspace{-2mm}}c@{\hspace{-2mm}}c}
\bj&\bk&\bl&\bm&\bn \\
\b{9}&\b{10}&\b{11}&\b{12}&\b{13}
\end{tabular}
\end{center}
\caption{\b{n} trees for $n=0,\ldots,13$ \cite[Figure~2]{Sch97tight}.}
\label{fig}
\end{figure}
In this section we show that the worst-case sequences of operations on skew heaps from \cite{Sch97tight} also apply
to weight-biased leftist heaps and rank-biased leftist heaps. We will first recall the golden
trees $G_n$ from \cite{Sch97tight}, and then show that these trees are weight-biased and also rank-biased.

We let $\l{n}$ denote Hofstadter's G-sequence \cite{Hof79} defined by $\l{0}=0$ and $\l{n}=n-\l{\l{n-1}}$, $n\geq1$, see also \url{www.oeis.org/A005206}.
Further, let $\r{n}=n-\l{n}$. A golden tree $\b{n}$ of order $n$, $n\geq0$, is defined recursively as the following
unlabeled binary tree:
\[\begin{array}{lcll}
 \b{0} &=& \Bin{}, \\
 \b{n} &=& \Bin{\b{\l{n-1}},\b{\r{n-1}}}, & n\geq1.
\end{array}\]
The golden trees form a supersequence of the (unlabeled) Fibonacci trees~\cite[Section~6.2.1]{Knu98} in the sense that
a golden tree $\b{n}$ corresponds to the Fibonacci tree of order $k$ whenever $n=\f{k+2}-1$, $k\geq0$. See Figure~\ref{fig},
trees $\b{0}, \b{1}, \b{2}, \b{4}, \b{7}$, and $\b{12}$ correspond to Fibonacci trees.

Clearly, golden trees are weight-biased leftist trees as $\size{G_n}=n$ and $\l{n}\geq\r{n}$ \cite[Lemma~3]{Sch97tight}.
In fact, $\l{n}/\r{n}$ approaches the golden ratio $\phi$ for larger $n$, which explains the name for these trees.
Since $R(n)$ is nondecreasing \cite[Lemma~3]{Sch97tight}, it is easily seen that $\rank{\b{n}}$ is also nondecreasing in $n$,
which implies that golden trees are also rank-biased leftist trees.

As a consequence, the heaps arising in the worst-case sequences of operations constructed in~\cite[Section 3]{Sch97tight}
will remain exactly the same for weight-biased heaps and for rank-biased heaps. That is,
all applications of \ch{}{} will behave the same for the three types of heaps.
Concretely, in the unlabeled case, this behavior is captured by the following version of \ch{}{}:
\[\begin{array}{lcl}
 \ch{\Bin{}}{\Bin{}} &=& \Bin{}, \\
 \ch{\Bin{t,u}}{y} &=&  \Bin{\ch{y}{u},t}.
\end{array}\]
Each application of \ch{}{} combines two golden trees of appropriate sizes
resulting in another golden tree, see~\cite[Lemma 7]{Sch97tight}:
\[ \b{n} = \ch{\b{\l{n}}}{\b{\r{n}}}.\]
This ``self-recreating'' property is illustrated in Figure~\ref{fig2}, where operation \ch{x}{y} is split into two stages.
First, the rightmost paths of $x$ and $y$ are merged, and then all left and right subtrees of
the nodes on the merged path are swapped. The balancing strategies defined in~Figure~\ref{tdsh},
which break ties by always swapping trees of equal weight or rank, ensure this behavior for the labeled case as well.
\begin{figure}[t]
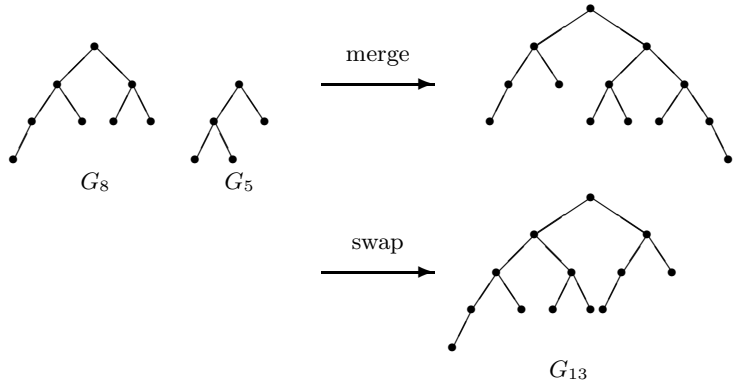

\setlength{\tabcolsep}{0pt}
\begin{center}
\begin{tabular}{cccc}
\bi & \bF & \arc{merge} & \bnt \\
\b{8} & \b{5} \\
& & \arc{swap} & \bn \\
& & & \b{13}
\end{tabular}
\end{center}
\caption{Operation \protect{\ch{\b{8}}{\b{5}}}, viewed as first merging the rightmost paths
and then swapping the subtrees of all nodes on the rightmost path, resulting
in \b{13} \cite[Figure~3]{Sch97tight}.}
\label{fig2}
\end{figure}

With this we have proved our second main result.
\begin{theorem}{} There exist sequences of operations on leftist heaps such that
the heaps may get arbitrarily large and for which the actual costs satisfy (counting comparisons):
\EMP, \ISE{x}, \MNM{x}, and \SIN{a} cost 0, \DLM{x} costs at least $\log_\phi \sz{x}-c$,
and \UNI{x}{y} costs at least $\log_\phi (\sz{x}+\sz{y})-c$, for some constant $c<2$.
\end{theorem}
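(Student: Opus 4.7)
The plan is to reuse verbatim the worst-case sequences of $\UNI{}{}$ and $\DLM{}$ operations constructed for skew heaps in~\cite{Sch97tight}, and to establish that they execute identically on both weight-biased and rank-biased leftist heaps. Most of the groundwork is already in place: the golden trees $\b{n}$ are simultaneously valid skew heaps, weight-biased leftist heaps, and rank-biased leftist heaps, because $\l{n}\geq\r{n}$ and $\rank{\b{n}}$ is nondecreasing in $n$, as noted above.

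The first step is to recall from~\cite[Section~3]{Sch97tight} that there is a fixed sequence of operations under which, at each macroscopic step, the current heap is a golden tree $\b{n}$ carrying an appropriate labeling that respects the heap property, and under which each $\DLM{}$ or $\UNI{}{}$ call triggers $\ch{\b{\l{m}}}{\b{\r{m}}}$ for some~$m$. By the self-recreating property~\cite[Lemma~7]{Sch97tight} the result is $\b{m}$, and this costs exactly $\rank{\b{\l{m}}}+\rank{\b{\r{m}}}$ comparisons.

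The second step is to verify that running this same sequence on either leftist variant produces the very same evolving trees. Inductively, every recursive $\ch{}{}$ call on the rightmost path has arguments of the form $\b{\l{k}}$ and $\b{\r{k}}$ for some~$k$. If $\l{k}>\r{k}$ strictly, the weight-biased $\bal{}$ leaves the subtrees in place, which is what the golden-tree recurrence requires; if $\l{k}=\r{k}$, the two arguments are congruent golden trees, and the tie-breaking choice $\sz{t}\leq\sz{u}$ in Figure~\ref{tdsh} swaps them, exactly as the skew-heap $\bal{}$ would do unconditionally. The analogous argument works rank-wise, using that the only rank ties among siblings occur between golden subtrees of the same order. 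Hence the sequence of tree shapes, and with it the comparison count at every individual step, matches the skew-heap execution exactly.

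Finally, one translates the counts into the claimed bound. Since $\l{n}/\r{n}\to\phi$ and golden trees contain the Fibonacci trees as a subsequence, $\rank{\b{n}}=\log_\phi\sz{\b{n}}+O(1)$, so $\T{\DLM{}}{\b{n}}=\rank{\b{\l{n-1}}}+\rank{\b{\r{n-1}}}\geq\log_\phi\sz{\b{n}}-c$ for some constant $c<2$, and the bound for $\UNI{}{}$ follows in the same way. These are \emph{actual} (not amortized) comparison counts, so they transfer to our setting unchanged. The main obstacle is really the second step: one must be explicit that the always-swap-on-equality convention of Figure~\ref{tdsh} is what makes the leftist and skew dynamics coincide on the golden-tree sequences. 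That convention was chosen precisely with this argument in mind, so once the congruence of the three dynamics is spelled out the remainder is a direct appeal to~\cite{Sch97tight}.
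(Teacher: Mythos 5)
Your overall route is the paper's own: take the worst-case sequences of \cite{Sch97tight}, observe that the golden trees are simultaneously skew, weight-biased, and rank-biased (via $\l{n}\geq\r{n}$ and monotonicity of $\rank{\b{n}}$), argue that the three dynamics coincide step by step, and then translate rank into $\log_\phi$. The problem is that your second step --- the only step with real content beyond the citation --- gets the balancing analysis backwards, and as written it fails. On these sequences every application of \ch{}{} acts by merging the two rightmost paths and then swapping the subtrees of \emph{all} nodes on the merged path: the unlabeled recurrence is $\ch{\Bin{t,u}}{y}=\Bin{\ch{y}{u},t}$, an \emph{unconditional} swap (see Figure~\ref{fig2}). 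Hence, for a leftist variant to reproduce the skew execution, $\bal{}$ must take the swap branch at \emph{every} node of the merge path, not only at ties. Your claim that when $\l{k}>\r{k}$ strictly ``the weight-biased $\bal{}$ leaves the subtrees in place, which is what the golden-tree recurrence requires'' is exactly inverted: if $\bal{}$ ever left the subtrees in place at a strict-inequality node, the result would differ from $\Bin{\ch{y}{u},t}$ (trees with subtrees of unequal weight change under a swap), the self-recreating property $\b{n}=\ch{\b{\l{n}}}{\b{\r{n}}}$ would break, and the leftist run would diverge from the skew run --- contradicting your own conclusion that the shapes match exactly. You appear to have conflated the fact that $\bal{}$ \emph{fixes} a tree already in golden form (indeed $\size{\b{\l{n-1}}}\geq\size{\b{\r{n-1}}}$) with what happens during a meld, where $\bal{}$ is applied to the pre-swap tree $\Bin{t,a,\ch{u}{y}}$, which is not in leftist form.

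The repair --- and this is what the paper's one-sentence justification compresses --- is that at each node of the merge path the retained subtree $t$ and the freshly melded subtree $\ch{u}{y}$ are, by \cite[Lemma~7]{Sch97tight}, the golden trees $\b{\r{m-1}}$ and $\b{\l{m-1}}$ for the relevant order $m$; hence $\size{t}=\r{m-1}\leq\l{m-1}=\size{(\ch{u}{y})}$, and since $\rank{\b{n}}$ is nondecreasing in $n$, also $\rank{t}\leq\rank{(\ch{u}{y})}$. So both the weight-biased and the rank-biased $\bal{}$ swap at every node of the merge path, with equality handled precisely by the swap-on-ties convention of Figure~\ref{tdsh} --- the one point where your emphasis on tie-breaking is on target. (Your side remark that rank ties among siblings occur only between golden trees of the same order is also false --- e.g., $\b{4}=\Bin{\b{2},\b{1}}$ has sibling ranks $1$ and $1$ --- but it is not needed: monotonicity of $\rank{\b{n}}$ suffices.) With the second step corrected in this way, your remaining steps --- identical comparison counts in all three models, then actual costs at least $\log_\phi\sz{\b{n}}-c$ via the Fibonacci subsequence --- go through and coincide with the paper's proof.
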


\section{Concluding Remarks}
We have analyzed the amortized complexity of leftist heaps. Compared to the well-known worst-case bounds,
our amortized bounds provide a more refined view of the computational complexity of leftist heaps.
Another example of such a refined amortized analysis is the analysis of insertions into AVL-trees by
Mehlhorn and Tsakalidis~\cite{MT86}. AVL-trees are height-balanced and achieve logarithmic time for the
main operations in the worst case. The amortized bounds yield a more profound understanding of the
time complexity and may also help in experimental studies of these data structures, e.g., to understand
why skew heaps and leftist heaps behave more similarly than one might anticipate.

For weight-biased heaps, we have been able to reduce the bound for \DLM{x}
from $2\,\log_2 \sz{x}$ to $\log_\phi \sz{x}\approx 1.44\,\log_2 \sz{x}$.
We would like to point out that the improved bounds from~\cite{KS91} are essential for this result.
For the potential function $\Pot{}$ from~\cite{ST86}, with $\pot{t}{u}=1$ if $\sz{t} < \sz{u}$ and $\pot{t}{u}=0$ otherwise, we do not get
an improvement over the worst-case bounds. Indeed, for any weight-biased heap $x$ we would simply have $\Pot{(x)}=0$.

For rank-biased heaps, all potential functions considered in this paper (including the above potential function from~\cite{ST86})
fall short of achieving any improvement over the worst-case bound for \DLM{x}. As explained in Section~\ref{section:amor-rank},
this leaves us with an interesting open problem regarding the computational complexity of rank-biased leftist heaps:
whether the amortized cost of \DLM{x} is equal to $\log_\phi \sz{x}$, equal to $2\,\log_2 \sz{x}$,
or something in between. Hopefully, innovations such as ATLAS~\cite{LMZ21} will help us in cracking this problem,
expediting the search for potential functions beating the worst-case bound for \DLM{}.

Moreover, as another interesting direction for further research (also in light of~\cite{BKK+23}),
we introduce the {\em randomized} leftist heaps.
This probabilistic variant is defined as in Figure~\ref{tdsh}, but with the following balancing strategy:
\[\begin{array}{lcll}
\bal{\Bin{t,a,u}}  &=& \Bin{t,a,u}, &\ \ \mbox{with probability $1-p$}, \\
                &\Bar& \Bin{u,a,t}, &\ \ \mbox{with probability $p$},
\end{array}\]
where $0\leq p\leq1$ is a fixed bias. For $p=1/2$, the behavior of these heaps resembles the randomized meldable heaps of~\cite{GM98},
for which the expected costs of \UNI{x}{y} and \DLM{x} are bounded by $2\log_2\sz{\ch{x}{y}}$ and $2\log_2\sz{x}$, respectively.
Note that these bounds coincide with the worst-case bounds for leftists heaps. For $p=1$, however, the randomized leftist heaps
reduce to skew heaps, for which we have superior amortized bounds.
Thus, the question is which value for $p$ will be optimal in which sense, e.g., minimizing the expected amortized costs {\em and} ensuring
that the worst-case actual costs are also limited with high probability (e.g., what about $p=1/\phi$).
The results for randomized splay trees (see~\cite{LMZ22} and references therein) seem a good starting point for such an amortized analysis.

\medskip

\noindent {\bf Acknowledgments.} Tom Verhoeff and the anonymous reviewers are gratefully acknowledged for their useful comments.

\newcommand{\etalchar}[1]{$^{#1}$}

\appendix
\section{Reachability of Rank-Biased Leftist Heaps}
\label{section:surjectivity}
When we consider trees $W_k$ to argue that $\A{\DLM{}}{W_k}\approx 2\log_2 \sz{W_k}$ for a given potential function $\Pot{}$
in Section~\ref{section:amor-rank}, we have to ensure that these trees can actually be reached using the operations
available for rank-biased leftist heaps.

As shown in Theorem~\ref{thm:anyheap} below, we only need operations $\EMP$, $\SIN{}$, $\UNI{}{}$
from Figure~\ref{tdsh} to reach any given rank-biased leftist heap.
\begin{lemma}{}\label{lem:anyheap}
For rank-biased heap $x$, a rank-biased heap $y$ exists s.t. $\ch{\Bin{}}{y}=x$.
\end{lemma}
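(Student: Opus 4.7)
The plan is to proceed by strong induction on $\sz{x}$, strengthening the inductive hypothesis to assert additionally that the witness satisfies $\rank{y}\leq\rank{x}$. This strengthening is crucial: without a bound on the rank of the recursively-obtained $v$, we cannot certify that the $y$ we build around it is rank-biased.

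For the base case $x=\Bin{}$, take $y=\Bin{}$. For the inductive step, write $x=\Bin{t,a,u}$ with $\rank{t}\geq\rank{u}$ and split on whether this inequality is strict, matching the two branches of the balancing rule for rank-biased heaps. If $\rank{t}>\rank{u}$, apply the IH to $u$ to obtain a rank-biased $v$ with $\ch{\Bin{}}{v}=u$ and $\rank{v}\leq\rank{u}$, and set $y=\Bin{t,a,v}$. If $\rank{t}=\rank{u}$, apply the IH to $t$ to obtain a rank-biased $v$ with $\ch{\Bin{}}{v}=t$ and $\rank{v}\leq\rank{t}$, and set $y=\Bin{u,a,v}$. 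In both cases the recursion is on a strictly smaller heap.

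For each construction I would check four conditions: (i) $y$ is rank-biased, where the only new rank comparison is with $v$ and the IH's rank bound supplies exactly the needed inequality; (ii) $y$ satisfies the heap property, since $\MNM{v}=\MNM{\ch{\Bin{}}{v}}$ is inherited from the corresponding subtree of $x$ and is therefore at least $a$; (iii) $\ch{\Bin{}}{y}=x$, obtained by unfolding one step of $\ch{}{}$ to $\bal{\Bin{t,a,u}}$ (Case~A, no swap) or $\bal{\Bin{u,a,t}}$ (Case~B, swap), where the case condition on $\rank{t}$ versus $\rank{u}$ is precisely what forces the balancing step to yield $\Bin{t,a,u}$; and (iv) $\rank{y}=\rank{v}+1\leq\rank{x}$, which propagates the strengthened IH.

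The main obstacle is spotting the strengthened induction hypothesis. In the $\rank{t}=\rank{u}$ case we are obliged to use the ``swap'' branch, placing $v$ as the right child of $u$, and only the rank bound $\rank{v}\leq\rank{t}=\rank{u}$ lets us conclude that $y$ is rank-biased. With the plain existence statement alone, one has no control over $\rank{v}$ and the induction simply does not close.
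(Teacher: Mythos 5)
Your proof is correct and takes essentially the same route as the paper's: the identical two-case construction ($y=\Bin{u,a,z}$ with the recursive witness for $t$ when $\rank{t}=\rank{u}$, and $y=\Bin{t,a,z}$ with the witness for $u$ when $\rank{t}>\rank{u}$), verified by the same one-step unfolding of $\ch{\Bin{}}{y}$ through the balancing step. Your strengthened induction hypothesis $\rank{y}\leq\rank{x}$ is a genuine refinement rather than a deviation: the paper's proof never explicitly checks that the constructed $y$ is rank-biased, and your rank bound (which in fact holds with equality, since melding a rank-biased heap with $\Bin{}$ preserves its rank, an alternative auxiliary lemma that would also close the plain induction) is exactly what discharges that obligation.
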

\begin{proof} By induction on $x$. If $x=\Bin{}$, we put $y=\Bin{}$. Clearly, $\ch{\Bin{}}{y}=\ch{\Bin{}}{\Bin{}}=\Bin{}=x$.
Otherwise $x=\Bin{t,a,u}$ with $\rank{t}\geq\rank{u}$, and we distinguish two cases.

If $\rank{t}=\rank{u}$, we put $y=\Bin{u,a,z}$, where $z$ is obtained from the
induction hypothesis for $t$, hence $\ch{\Bin{}}{z}=t$, and
\[ \ch{\Bin{}}{y} = \ch{\Bin{}}{\Bin{u,a,z}} = \bal{\Bin{u,a,\ch{\Bin{}}{z}}} = \bal{\Bin{u,a,t}} = \Bin{t,a,u}=x. \]

If $\rank{t}>\rank{u}$, we put $y=\Bin{t,a,z}$, where $z$ is obtained from the
induction hypothesis for $u$, hence $\ch{\Bin{}}{z}=u$, and
\[ \ch{\Bin{}}{y} = \ch{\Bin{}}{\Bin{t,a,z}} = \bal{\Bin{t,a,\ch{\Bin{}}{z}}} = \bal{\Bin{t,a,u}} = \Bin{t,a,u}=x. \]
\end{proof}

\begin{theorem}{}\label{thm:anyheap}
Any rank-biased heap can be generated using $\EMP$, $\SIN{}$, $\UNI{}{}$.
\end{theorem}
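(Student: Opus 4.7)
The plan is to prove the theorem by strong induction on $\sz{x}$. The base cases $\sz{x}=0$ and $\sz{x}=1$ give $x=\EMP$ and $x=\SIN{a}$ immediately. For the inductive step with $\sz{x}=n\geq 2$, the first move is to apply Lemma~\ref{lem:anyheap} to $x$, obtaining a rank-biased heap $y=\Bin{y_L,a,y_R}$ with $\UNI{\EMP}{y}=x$ (so in particular $a$ is the root of $x$). Since $\sz{y}=n$, a direct recursion on $y$ is not available; instead, note that $y_L$ is a rank-biased heap with $\sz{y_L}<n$, so Lemma~\ref{lem:anyheap} applies a second time to $y_L$ and yields a rank-biased heap $y'_L$, of the same size as $y_L$, with $\UNI{\EMP}{y'_L}=y_L$.

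The identity to check is
\[ x \;=\; \UNI{\UNI{\SIN{a}}{y'_L}}{y_R}. \]
Unfolding $\UNI{\SIN{a}}{y'_L}=\ch{\SIN{a}}{y'_L}$ using $a\leq\MNM{y'_L}$ (inherited from the heap property of $y$, since $y'_L$ and $y_L$ carry the same elements) gives $\bal{\Bin{\Bin{},a,\ch{\Bin{}}{y'_L}}}=\bal{\Bin{\Bin{},a,y_L}}$ by the choice of $y'_L$. Because $\sz{x}\geq 2$ forces $y_L\neq\Bin{}$ (otherwise rank-biasedness of $y$ would force $y_R=\Bin{}$ as well, collapsing $y$ to $\SIN{a}$), the balance step swaps and produces $\Bin{y_L,a,\Bin{}}$. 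The outer meld then unfolds as $\ch{\Bin{y_L,a,\Bin{}}}{y_R}=\bal{\Bin{y_L,a,\ch{\Bin{}}{y_R}}}$ (using $a\leq\MNM{y_R}$, again with $\MNM{\Bin{}}=\infty$), and this is literally the expansion of $\ch{\Bin{}}{\Bin{y_L,a,y_R}}=\UNI{\EMP}{y}=x$.

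With the identity verified, the induction closes: $y'_L$ and $y_R$ both have size strictly less than $n$, so by the inductive hypothesis each is generated by an expression in $\EMP$, $\SIN{}$, and $\UNI{}{}$; substituting those expressions into the displayed identity yields such an expression for $x$.

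The conceptual hurdle is noticing that one invocation of Lemma~\ref{lem:anyheap} is not enough, because the $y$ it produces has the same size as $x$. Applying the lemma a second time, to the strictly smaller subtree $y_L$, replaces the problematic $\ch{\Bin{}}{y'_L}$ arising inside the inner meld with $y_L$, making the two meld computations telescope into the very same tree that the lemma already equates with $x$. The remainder is a mechanical check of the two applications of $\ch{}{}$ and the corresponding balance steps.
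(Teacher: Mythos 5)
Your proof is correct, and although it rests on the same key lemma (Lemma~\ref{lem:anyheap}) as the paper, it takes a genuinely different route. The paper does structural induction on $x=\Bin{t,a,u}$ with a three-way case split on $\rank{t}$ versus $\rank{u}$: it generates a stub $\Bin{u,a,\Bin{}}$ (or $\Bin{t,a,\Bin{}}$) via the induction hypothesis together with a Lemma~\ref{lem:anyheap}-preimage of the other subtree, and restores $x$ with one meld. You instead apply the lemma twice, to $x$ and then to $y_L$, and verify the single uniform identity $x=\UNI{\UNI{\SIN{a}}{y'_L}}{y_R}$, recursing on $y'_L$ and $y_R$. Your route buys two things: no rank case analysis (the inner balancing step always swaps because the left stub is $\Bin{}$, and the outer one needs no inspection since the expression is literally the unfolding of $\ch{\Bin{}}{y}$), and a size measure that strictly decreases on every recursive call. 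The latter quietly repairs a wrinkle in the paper's own argument: in its case $\rank{t}>\rank{u}$ with $u=\Bin{}$, the stub $\Bin{t,a,\Bin{}}$ \emph{is} $x$, so invoking the induction hypothesis on it is circular as written; your inner meld $\UNI{\SIN{a}}{y'_L}$, building $\Bin{y_L,a,\Bin{}}$ from strictly smaller pieces, is exactly the needed patch. The paper's version, in return, is shorter: one lemma application and one meld per step. Three cosmetic remarks: your base cases should read $\size{x}=0$ and $\size{x}=1$, since $\sz{x}=\size{x}+1\geq1$; the detour establishing $y_L\neq\Bin{}$ is unnecessary, as $\bal{\Bin{\Bin{},a,y_L}}$ swaps whenever $\rank{\Bin{}}=0\leq\rank{y_L}$, which always holds; and, like the paper, you silently use that $\ch{}{}$ preserves the multiset of elements to justify $a\leq\MNM{y'_L}$, which is fine but deserves a word.
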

\begin{proof}
By induction on $x$, we show that any rank-biased heap $x$ can be generated using operations $\EMP$, $\SIN{}$, and $\UNI{}{}=\ch{}{}$.

If $x=\Bin{}$, then we obtain $x$ as $x=\EMP$.
Otherwise $x=\Bin{t,a,u}$ with $\rank{t}\geq\rank{u}$, and we distinguish three cases.

If $\rank{t}=\rank{u}=0$, then $t=u=\Bin{}$ and we obtain $x=\Bin{a}$ as $x=\SIN{a}$.

If $\rank{t}=\rank{u}\geq1$, we generate $\Bin{u, a, \Bin{}}$ and $y$ using the induction hypothesis twice,
where $y$ is a rank-biased heap such that $\ch{\Bin{}}{y}=t$ (cf.~Lemma~\ref{lem:anyheap}). Then we have
\[ \ch{\Bin{u,a,\Bin{}}}{y} = \bal{\Bin{u,a,\ch{\Bin{}}{y}}} = \bal{\Bin{u,a,t}} = \Bin{t,a,u}=x. \]

If $\rank{t}>\rank{u}$, we generate $\Bin{t, a, \Bin{}}$ and $y$ using the induction hypothesis twice,
where $y$ is a rank-biased heap such that $\ch{\Bin{}}{y}=u$ (cf.~Lemma~\ref{lem:anyheap}). Then we have
\[ \ch{\Bin{t,a,\Bin{}}}{y} = \bal{\Bin{t,a,\ch{\Bin{}}{y}}} = \bal{\Bin{t,a,u}} = \Bin{t,a,u}=x. \]
\end{proof}
The same kind of reachability can be proved for skew heaps and for weight-biased heaps.
See the notion of {\em surjectivity} for monoalgebras~\cite[Def.~2.19]{Sch92} for a formalization of reachability.

\section{Proof of Eq.~(\ref{eq:potdiff})}
\label{section:proof4}
To evaluate the potential difference $\Pot{(V_k)} - \Pot{(W_k)}$,
we will use the skewed view of binary trees from~\cite[Section~6.2.2]{Sch92}, defined as the smallest set $X$ satisfying
\[ X = [X \times A]. \]
Here, an empty list $[\ ]$ represents an empty tree, and a nonempty list $[\Bin{t,a}] +\!\!\!\!\!+ u$ represents
a binary tree with left subtree $t$, root value $a$, and right subtree $u$. We write $[a]$ as a shorthand for $[\Bin{[\ ], a}]$.

The skewed view gives direct access to the rightmost path from the root of a binary tree.
We use this to access the nodes near the end of the rightmost paths in $T_k$, $U_k$, and $V_k=\ch{T_k}{U_k}$:
\begin{eqnarray*}
T_k &:& \quad T'_k +\!\!\!\!\!+ [\Bin{[0], 0}] +\!\!\!\!\!+ [2] \\
U_k &:& \quad U'_k +\!\!\!\!\!+ [1] \\
V_k &:& \quad T'_k +\!\!\!\!\!+ [\Bin{U'_k +\!\!\!\!\!+ [\Bin{[2], 1}], 0}] +\!\!\!\!\!+ [0],
\end{eqnarray*}
where subtree $U'_k +\!\!\!\!\!+ [\Bin{[2], 1}]$ of $V_k$ corresponds to $\ch{\Bin{2}}{U_k}$.

Recalling that $\pot{t}{u}= f(\sz{t},\sz{u})$ with $f(m,n)= \log_\beta (\beta n/(m+n))$,
we evaluate $\Pot{(V_k)} - \Pot{(W_k)}$ as follows:
\[\begin{array}{cl}
&\Pot{(V_k)} - \Pot{(W_k)} \\[2mm]
=&\Pot{(\ch{T_k}{U_k})} - \Pot{\Bin{T_k,0,U_k}} \\[2mm]
=&\Pot{(\ch{T_k}{U_k})} - \Pot{(T_k)} -\pot{T_k}{U_k} -\Pot{(U_k)} \\[2mm]
=&-\pot{T_k}{U_k}\\[1mm]
&+\ \sum_{i=3}^k \left(\pot{B^0_i}{\ch{T_{i-1}}{U_k}}-\pot{B^0_i}{T_{i-1}}\right) \hfill \mbox{``along $T'_k$''}\\[1mm]
&+\ \pot{\ch{\Bin{2}}{U_k}}{\Bin{0}} - \pot{\Bin{0}}{\Bin{2}} \hfill \mbox{``along $[\Bin{[0], 0}] +\!\!\!\!\!+ [2] $''}\\[1mm]
&+\ \sum_{i=2}^k \left(\pot{B^1_i}{\ch{\Bin{2}}{U_{i-1}}}-\pot{B^1_i}{U_{i-1}}\right) \hfill \mbox{``along $U'_k$''}\\[1mm]
&+\ \pot{\Bin{2}}{\Bin{}} - \pot{\Bin{}}{\Bin{}} \hfill \mbox{``along $[1]$''}\\[2mm]
=& -f(2^{k+1}-4,2^{k+1}-2) \\[1mm]
&+\ \sum_{i=3}^k \left(f(2^i,2^i-4+2^{k+1}-2-1)-f(2^i,2^i-4)\right) \\[1mm]
&+\ f(2^{k+1}-1,2)-f(2,2) \\[1mm]
&+\ \sum_{i=2}^k \left(f(2^i,2^i-1)-f(2^i,2^i-2)\right)\\[1mm]
&+\ f(2,1)  -f(1,1) \\[2mm]

=&-\log_\beta \frac{\beta(2^{k+1}-2)}{2^{k+2}-6}+\sum_{i=3}^k \log_\beta \left(\frac{2^i+2^{k+1}-7}{2^i-4} \ \frac{2^i+2^i-4}{2^i+2^i+2^{k+1}-7}\right) \\
&+\ \log_\beta \frac{2+2}{2^{k+1}-1+2}+\sum_{i=2}^k \log_\beta \left(\frac{2^i-1}{2^i-2} \ \frac{2^i+2^i-2}{2^i+2^i-1}\right)+\log_\beta \frac{1+1}{2+1} \\[2mm]

=&-1 - \log_\beta \frac{2^{k+1}-2}{2^{k+2}-6}+\sum_{i=3}^k \log_\beta \left(\frac{2^{i+1}-4}{2^i-4} \ \frac{2^i+2^{k+1}-7}{2^{i+1}+2^{k+1}-7}\right) \\
&+\ \log_\beta \frac{4}{2^{k+1}+1}+\sum_{i=2}^k \log_\beta \left(\frac{2^{i+1}-2}{2^i-2} \ \frac{2^i-1}{2^{i+1}-1}\right)+\log_\beta \frac{2}{3} \\[2mm]

=&-1 + \log_\beta \frac{2^{k+2}-6}{2^{k+1}-2} + \log_\beta \left(\frac{2^{k+1}-4}{2^{3}-4} \ \frac{2^{3}+2^{k+1}-7}{2^{k+1}+2^{k+1}-7}\right) \\
&+\ \log_\beta \frac{4}{2^{k+1}+1} +\log_\beta \left(\frac{2^{k+1}-2}{2^{2}-2} \ \frac{2^{2}-1}{2^{k+1}-1}\right) +\log_\beta \frac{2}{3}\\[2mm]

=&-1 + \log_\beta  \left(\frac{2^{k+2}-6}{2^{k+1}-2} \ \frac{2^{k+1}-4}{4} \ \frac{2^{k+1}+1}{2^{k+2}-7} \ \frac{4}{2^{k+1}+1} \ \frac{2^{k+1}-2}{2} \ \frac{3}{2^{k+1}-1} \ \frac{2}{3}\right)\\[2mm]
=&-1 + \log_\beta \left(\frac{2^{k+2}-6}{2^{k+2}-7} \frac{2^{k+1}-4}{2^{k+1}-1}\right),
\end{array}\]
which proves Eq.~(\ref{eq:potdiff}).
Note that the leftist heaps $W_k=\Bin{T_k,0,U_k}$ have been crafted to ensure proper cancellations in this result. 

\end{document}